\newtheorem{theorem}{Theorem}
\newtheorem{corollary}[theorem]{Corollary}
\newtheorem{lemma}[theorem]{Lemma}
\newtheorem{proposition}[theorem]{Proposition}
\newtheorem{fact}[theorem]{Fact}
\theoremstyle{definition}
\theoremstyle{remark}
\newcommand{\E}{\mathbb{E}}
\newcommand{\N}{\mathbb{N}}
\newcommand{\algi}{\hspace{15pt}}
\newcommand{\algii}{\algi\algi}
\newcommand{\bmc}{{BM$st$C}}
\newcommand{\dks}{{D$k$S}}
\newcommand{\nfi}{{NFI}}
\newcommand{\OPT}{\text{OPT}}
\newcommand{\st}{$s$-$t$}
\DeclareMathOperator*{\maxstflow}{max-s-t-flow}
\begin{document}
\title{Hardness and Approximation for Network Flow Interdiction}

\author{Stephen R.\ Chestnut\thanks{Department of Mathematics, ETH Zurich. E-mail:\texttt{stephenc@ethz.ch}.}
\and Rico Zenklusen\thanks{Department of Mathematics, ETH Zurich, and
Department of Applied Mathematics and Statistics,
Johns Hopkins University. E-mail:\texttt{ricoz@math.ethz.ch}.} }

\maketitle  

\begin{abstract}
In the Network Flow Interdiction problem an adversary attacks a network in order to minimize the maximum \st-flow.
Very little is known about the approximatibility of this problem despite decades of interest in it.
We present the first approximation hardness, showing that Network Flow Interdiction and several of its variants cannot be much easier to approximate than Densest $k$-Subgraph. In particular, any $n^{o(1)}$-approximation algorithm for Network Flow Interdiction would imply an $n^{o(1)}$-approximation algorithm for Densest $k$-Subgraph.
We complement this hardness results with the first approximation algorithm for Network Flow Interdiction, which has approximation ratio $2(n-1)$.
We also show that Network Flow Interdiction is essentially the same as the Budgeted Minimum \st-Cut problem, and transferring our results gives the first approximation hardness and algorithm for that problem, as well.
\end{abstract}

\smallskip
\noindent {\small {\bf Keywords:} Network flow interdiction, approximation algorithms, hardness of approximation, budgeted optimization}

\section{Introduction}

We are given an undirected graph $G=(V,E)$ with edge capacities $u(e)\geq 0$, for all $e\in E$, and distinct vertices $s,t\in V$. 
An adversary removes edges from the graph with the goal of reducing the maximum \st-flow. 
It costs the adversary $c(e)$ to remove edge~$e$ and he has a total budget $B$ for removing edges.
The \emph{Network Flow Interdiction (\nfi{})} problem is to determine the optimal strategy for the adversary.  
More precisely, given $G$, $s$, $t$, $u$, $c$, and $B$, the goal is to find a set of edges
$R\subseteq E$ such that $c(R)\coloneqq \sum_{e\in R}c(e) \leq B$ and the maximum \st-flow in $(V,E\setminus R)$ is minimized.

Network flow interdiction has a long history in combinatorial optimization beginning with the famous max-flow/min-cut theorem of Ford and Fulkerson. 
A declassified RAND report from 1955~\cite{harris1955fundamentals} regarding interdiction of the Soviet rail network in Eastern Europe is cited by Ford and Fulkerson as motivation for the minimum \st-cut problem~\cite{schrijver2002history}.
If there is an \st-cut with cost at most $B$, then an optimal strategy for the adversary is to remove all of its edges.  On the other hand, if all of the \st-cuts have cost greater than $B$ then there need be no relation between min cost cuts and optimal \nfi{} solutions.
Besides the obvious military applications like supply-line disruption, Network Flow Interdiction and its variants have been proposed as a model for combating the spread of infectious diseases in a hospital~\cite{assimakopoulos1987network}, drug interdiction~\cite{wood1993deterministic}, and critical infrastructure analysis~\cite{murray2007critical}. 
\nfi{} assesses the worst case impact of a limited set of failures in a network flow, so it can also be used to asses the robustness of a network.

Despite decades of interest in the problem, surprisingly little is known about its approximatibility.
Only strong NP-hardness is known by reductions from Clique and Minimum Bisection~\cite{wood1993deterministic,phillips1993network}, and also the Budgeted Minimum \st-Cut (\bmc{}) problem, which we show is essentially equivalent to the \nfi{} problem, has been proved to be strongly NP-hard~\cite{papadimitriou2000approximability}.

Several authors have proposed exact mixed-integer programming formulations and enumeration based algorithms for \nfi{} and its variants (see~\cite{wood1993deterministic} and \cite{royset2007solving} and the references therein).
But as far a polynomial time algorithms go, there is only a $(1+\epsilon,1+\frac{1}{\epsilon})$-pseudoapproximation algorithm due to Burch et al.~\cite{burch2003decomposition}.
Specifically, for any $\epsilon >0$, that algorithm returns either a $(1+\frac{1}{\epsilon})$-approximation, or a super-optimal solution that violates the budget by as much as a factor of $1+\epsilon$; however, one cannot choose which of the two outcomes happens.
In the special case when $G$ is a planar network, there is also an FPTAS~\cite{phillips1993network}, which can be extended to allow for
removal of vertices~\cite{zenklusen2010network}. 
We begin to fill this gap on the hardness side by showing that \nfi{} cannot be much easier to approximate than the Densest $k$-Subgraph (\dks{}) problem, and on the approximation side by giving the first true approximation algorithm for \nfi{}, which has approximation ratio $2(n-1)$.
We also supply an approximation preserving reduction from \bmc{} to \nfi{} and a reverse reduction that preserves the approximation ratio up to a factor of $(1+\epsilon)$.
This allows us to use our approximation algorithm for \nfi{} as a $2(n-1)$-approximation algorithm for \bmc{} and proves that \bmc{} cannot be much easier to approximate than \dks{}.

The \dks{} problem is to determine for a given graph $H$ the maximum number of edges in any $k$-vertex subgraph of $H$.
\dks{} is clearly NP-hard since it captures the problem of finding
a clique of size $k$.
The complexity of approximating \dks{} is still an open problem, but the available evidence indicates that the problem could be very hard to approximate.
The first polynomial time approximation algorithm, given by Feige, Peleg, and Kortsarz in 1997~\cite{feige2001dense}, has approximation ratio~$O(n^{1/3})$, and this was not improved until 2010 when Bhaskara, Charikar, Chlamtac, Feige, and Vijayaraghavan~\cite{bhaskara2010detecting} found an algorithm with approximation ratio~$O(n^{\frac{1}{4}+\epsilon})$, for any $\epsilon>0$.  
The latter is the currently best known guarantee.
As for hardness, under different complexity-theoretic assumptions Feige~\cite{feige2002relations} (assuming random 3-SAT instances are hard to refute) and Khot~\cite{khot2006ruling} (assuming NP does not have sub-exponential time randomized algorithms) have shown that there is no PTAS for \dks{}.
This extends to our results in that the same assumptions imply that no PTAS exists for \nfi{}.
Returning to \dks{}, significant gains have been ruled out for lift and project hierarchies.  
An integrality gap of $\Omega(n^{\epsilon})$ persists even for $n^{1-O(\epsilon)}$ rounds in the Lasserre hierarchy, which implies the same for Sherali-Adams and Lov\'asz-Schrijver hierarchies, and for the Sherali-Adams hierarchy the integrality gap is $\Omega(n^{1/4}/\log^3{n})$ after $O(\frac{\log{n}}{\log\log{n}})$ rounds~\cite{bhaskara2012polynomial}, matching the best known approximation ratio.

It is often easy to prove NP-hardness for interdiction variants of combinatorial optimization problems, but our result is one of only a handful of approximation hardness theorems for interdiction problems.
Others include APX hardness of shortest path interdiction by removal of edges~\cite{khachiyan2008short}, $k$-median and $k$-center interdiction~\cite{bazgan2013complexity}, and assignment problem interdiction~\cite{bazgan2013criticalassignment} as well as clique hardness for spanning tree interdiction by removal of nodes~\cite{bazgan2013criticalmst}.
Approximation or bicriteria approximation algorithms are known for the interdiction versions of minimum spanning tree by removal of edges~\cite{frederickson1999increasing,zenklusen2015approximation}, (fractional) multicommodity flow \cite{chuzhoy2012approximation}, matching~\cite{zenklusen2010matching} and more generally packing LPs~\cite{dinitz2013packing}.
Overall, strikingly little is known about interdiction variants of common optimization problems, despite many natural applications.

The next section presents our notation and precise definitions of the problems.
Approximation hardness for \nfi{} is proved in Section~\ref{sec: hardness}, and Section~\ref{sec: approximation} describes the $2(n-1)$-approximation algorithm.
Section~\ref{sec: variants} establishes connections between \nfi{} and several of its variants, as well as with \bmc{}.

\section{Preliminaries}\label{sec: preliminaries}

Let $G=(V,E)$ be an undirected graph which may have parallel edges.
We often use $G$ to represent an \nfi{} instance. 
A set of vertices $C\subseteq V$ is an \st-cut if $s\in C$ and $t\notin C$ and we sometimes also identify a cut by its set of edges~$\delta_G(C) = \{e\in E\mid e\text{ has exactly one endpoint in }C\}$. 
When the graph is clear from the context we may drop the $G$ subscript.
The cost of a cut is $c(\delta(C))$ and its capacity is $u(\delta(C))$.
$H$ is also an undirected graph that usually represents the input for \dks{}.
Given $S\subseteq V(H)$ the density of $S$ is $d_H(S)= |E_H[S]|/|S|$, where $E_H[S]$ is the set of edges of $H$ with both endpoints in $S$.
Similarly, we use $E[S,T]$, for $S\cap T=\emptyset$, to denote the set of edges with one endpoint in $S$ and one in $T$.

Let $\N = \{0,1,2,\ldots\}$.
An instance of \nfi{} is an undirected graph $G=(V,E)$, capacities $u:E\to\N$, distinct vertices $s,t\in V$, costs $c:E\to\N$, and a budget $B$, and the goal is to minimize the value of a maximum \st-flow in $(V,E\setminus R)$ by choosing a set $R\subseteq E$ satisfying $c(R)\leq B$.
An instance of \bmc{} is the same as for \nfi{} and the goal is to determine the minimum capacity $u(C)$ of any \st-cut $C\subseteq V$ such that $c(\delta(C))\leq B$.
An instance of \dks{} is a simple, undirected graph $H$ and an integer $0<k<n$ and the goal is to identify a set of $k$ vertices $W\subseteq V(H)$ that maximizes the number of edges (equivalently the density) in the subgraph of $H$ induced by the vertices $W$, which we denote by $H[W]$.

An algorithm is an $\alpha$-approximation for a given minimization problem if it is a polynomial time algorithm that produces a feasible solution with objective no more than $\alpha\cdot \OPT$.  
For a maximization problem, the objective value is at least $\OPT/\alpha$.

For convenience, we often use $\infty$ as the cost or capacity of an edge.
This is merely a notational contrivance, it does not affect the generality since the cost of any edge can be decreased to $B+1$ without affecting the problem, and if an instance with infinite capacity edges has finite maximum flow then the capacities can be reduced to this value.
Setting the cost of an edge to infinity is just
a way to denote an edge that can never be removed.
Similarly, an infinite capacity edge can
never be in a minimum $s$-$t$-cut after interdiction.

\section{Hardness of approximation}\label{sec: hardness}

The form of our reduction from \dks{} to \nfi{} is inspired by two other hardness proofs, one published in~\cite{wood1993deterministic}, which proves NP-hardness of \nfi{} on directed graphs, and another proof from~\cite{zenklusen2010network}, which reduces \dks{} on planar graphs to \nfi{} with vertex deletions on directed planar graphs with multiple sources and sinks, although neither considers hardness of approximation. 
In both works, the authors subdivide the edges of the \dks{} input graph $H$, direct the new edges, and use a series of \nfi{} instances with different budgets in order to solve either Clique or \dks{}, respectively.
The main challenge in proving \dks{}-hardness for \nfi{} is to design new edge costs and capacities that lead to well-structured interdiction solutions.

Let us first outline the strategy of the reduction.
Given a \dks{} instance $H=(V,E)$, we form a new graph $G$ with $|V|+|E|+2$ vertices and $|V|+3|E|$ edges by subdividing the edges of $H$ and adding a source and a sink, as shown in Figure~\ref{fig: reduction example}.
With the right choice of $B$, \st-flow interdiction solutions in $G$ correspond to subgraphs of $H$ that have the same number of edges as a densest $k$-subgraph.
Furthermore, the optimal interdiction solution corresponds exactly to a densest $k$-subgraph, and an $\alpha$-approximate interdiction solution yields a subgraph with at most $\alpha k$ vertices.
Such a subgraph $K$ with size between $k$ and $\alpha k$ easily certifies the existence of a $k$-subgraph with density at least $\frac{1}{2\alpha^2}d(K)$ in the original graph; it suffices to consider a random induced subgraph of $K$ with $k$ vertices. Furthermore, one can also easily find such
a high-density subgraph of $K$ deterministically, as we will discuss
briefly later.

The upshot is a factor $2\alpha^2$ approximation algorithm for \dks{} where $\alpha$ is the \nfi{} approximation ratio for an input graph with $|V|+|E|+2\leq |V|^2$ vertices.
The algorithm returns the approximate densest $k$-subgraph as long as the interdiction algorithm returns the set $R\subseteq E$ of interdicted edges, otherwise it just approximates the objective value.

The next lemma establishes the second step in our reduction, namely given a graph with $\ell^*$ edges, where $\ell^*$ is the number of edges in a densest $k$-subgraph of $H$, and at most $\alpha k$ vertices it finds a $k$-subgraph with at least $\ell^*/2\alpha$ edges.
A similar lemma appears in \cite{joret2012reducing}, but the algorithm we present is simpler and more efficient and we include it for completeness.
The proof that \nfi{} is roughly as hard to approximate as \dks{} follows it.

\begin{lemma}\label{lem: subsampling}
There is a deterministic algorithm that takes as input a graph $H=(V,E)$, with $|V|\geq k$, and returns an induced subgraph $K$ of $H$ with $k$ vertices and density at least $\frac{k-1}{|V|-1}d(H)$.
The running time of the algorithm is $O(|E|)$.
\end{lemma}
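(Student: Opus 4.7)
The plan is to analyze the classical peeling algorithm: repeatedly delete a vertex of minimum current degree, stop once exactly $k$ vertices remain, and return the induced subgraph on those survivors.

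For the density guarantee I would let $G_i$ denote the graph after $i$ peels and set $n_i = |V(G_i)| = |V| - i$ and $m_i = |E(G_i)|$. Because the minimum degree in $G_i$ is at most the average degree $2m_i/n_i$, the deleted vertex $v$ satisfies $\deg(v) \le 2m_i/n_i$ and therefore
\[
d(G_{i+1}) \;=\; \frac{m_{i+1}}{n_i - 1} \;\ge\; \frac{m_i - 2m_i/n_i}{n_i - 1} \;=\; d(G_i) \cdot \frac{n_i - 2}{n_i - 1}.
\]
Telescoping this bound for $i = 0, 1, \dots, |V|-k-1$ would yield
\[
d(G_{|V|-k}) \;\ge\; d(H) \prod_{j=k}^{|V|-1} \frac{j-1}{j} \;=\; \frac{k-1}{|V|-1}\, d(H),
\]
which is exactly the density promised.

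For the $O(|E|)$ running time I would use the standard bucket implementation familiar from the linear-time degeneracy algorithm: store the current vertices in doubly-linked lists indexed by degree, maintain a pointer to the lowest nonempty bucket, and on each iteration extract a vertex from that bucket and decrement the bucket membership of each of its neighbors. Each such update is $O(1)$, and charging neighbor updates to edge endpoints yields total work $O(|E|)$ once isolated vertices have been swept aside up front.

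I expect the main obstacle to be bookkeeping rather than anything conceptual: the density inequality is a one-line telescoping once the minimum-degree invariant is noted, and the linear-time peeling is classical. The one subtle corner case is when deleting isolated vertices leaves fewer than $k$ non-isolated vertices, in which case the output must be padded with some isolated ones; there the output density is $|E|/k$ and the required inequality $|E|/k \ge \tfrac{k-1}{|V|-1}\cdot |E|/|V|$ reduces to $|V|(|V|-1) \ge k(k-1)$, which holds since $|V| \ge k$.
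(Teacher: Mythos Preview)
Your argument is correct and gives a genuinely different proof from the paper's. The paper proceeds probabilistically: a uniformly random $k$-subset of $V$ induces $\frac{k(k-1)}{|V|(|V|-1)}|E|$ edges in expectation, which is exactly the target density, and it then derandomizes by the method of conditional expectations (iterating over the vertices and greedily deciding to include or exclude each one so as to keep the conditional expected density above the threshold). Your peeling argument replaces the probabilistic step and its derandomization by the elementary per-step inequality $d(G_{i+1})\ge d(G_i)\cdot\frac{n_i-2}{n_i-1}$ and a telescoping product; this is arguably more self-contained, and the linear-time bucket implementation is the standard degeneracy-ordering routine. The paper's route has the minor conceptual bonus of showing that \emph{most} $k$-subsets meet the bound (since the expectation already does), whereas peeling exhibits one particular good subset; for the lemma as stated this distinction is irrelevant. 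Your handling of isolated vertices and of the running-time amortization is fine, with the same $O(|V|)$-versus-$O(|E|)$ quibble one could equally level at the paper's loop over all $n$ vertices.
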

\begin{proof}
We begin with a randomized algorithm that achieves density at least $d=d(H)\frac{k-1}{|V|-1}$ in expectation and proceed to derandomize it with the method of conditional expectations.
The randomized algorithm chooses a uniformly random set of $k$ vertices from $V(H)$ and takes $K$ to be the induced subgraph on these vertices.
Defined this way, the expected number of edges in $K$ is $\frac{k(k-1)}{|V|(|V|-1)}|E|$, hence the expected density is $\E [d(K)] =d$.

Let us label the vertices $v_1,\ldots,v_n$ arbitrarily and denote by $V_i$ the set $\{v_1,v_2,\ldots,v_i\}$.
The derandomized algorithm is the following:
\begin{enumerate}[{\bf \algi 1}]\itemsep=-3pt
\item Initialize $S=\emptyset$
\item For $i=1,2,\ldots, n$
\item\algi If $|E[S+v_i]| + \frac{k-|S|-i}{n-i}|E[v_i,V\setminus V_i]| \geq \frac{1}{|V|-i}|E[S,V\setminus V_i]| + \frac{2(k-|S|-1)}{(|V|-i)(|V|-i-1)}|E[V\setminus V_i]|$
\item\algii $S\gets S\cup \{v_i\}$
\item\algi Otherwise leave $S$ unchanged
\item Return $K=(S,E[S])$
\end{enumerate}

The inequality in Line 3 amounts to checking $\E[d(K)\mid V_i\cap K = S\cup\{v_i\}]\geq \E[d(K)\mid V_i\cap K=S]$.
The algorithm is correct because it maintains the invariant $\E[d(K) \mid S\subseteq V(K)]\geq d$.  
To compute each of the values needed for a single iteration, given the values from the previous iteration, it suffices to pass once over the list of neighbors of $v$.
Thus the total time required is $O(|E|)$.
\end{proof}

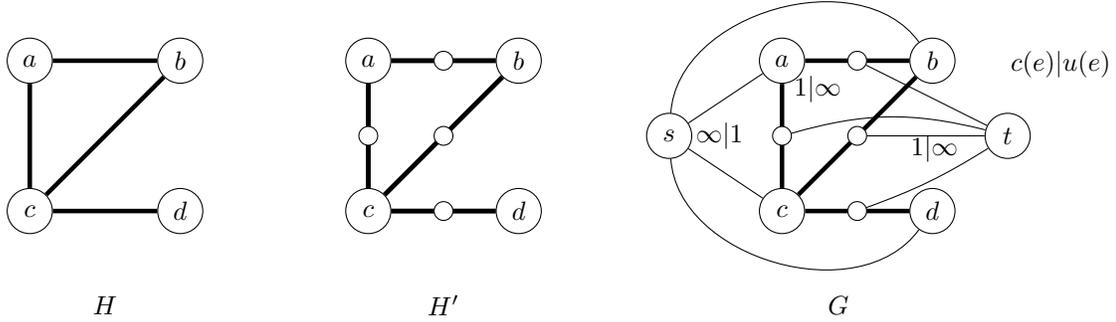
\begin{figure}
\begin{center}
\begin{tikzpicture}[scale=0.5]
\tikzstyle{every node}=[circle,draw,scale=1,minimum size=0.6cm,inner sep=0pt]
\pgfmathsetmacro{\dist}{6}

\begin{scope}
\node (a1) at (-\dist-4,2) {$a$};
\node (b1) at (-\dist,2) {$b$};
\node (c1) at (-\dist-4,-2) {$c$};
\node (d1) at (-\dist,-2) {$d$};

\foreach \x/\y in {a1/b1,b1/c1,a1/c1,c1/d1}
\draw [line width=0.6mm] (\x) -- (\y);

\begin{scope}[every node/.style={}, yshift=-4.5cm]
\node at (-\dist-2,0) {$H$};
\end{scope}
\end{scope}

\begin{scope}[xshift=9cm]
\node (a2) at (-\dist-4,2) {$a$};
\node (b2) at (-\dist,2) {$b$};
\node (c2) at (-\dist-4,-2) {$c$};
\node (d2) at (-\dist,-2) {$d$};

\foreach \x/\y in {a2/b2,b2/c2,a2/c2,c2/d2}
\draw [line width=0.6mm] (\x) -- (\y);

\begin{scope}[every node/.style={}, yshift=-4.5cm]
\node at (-\dist-2,0) {$H'$};
\end{scope}
\end{scope}

\begin{scope}[xshift=8cm]
\node (s) at (-1,0) {$s$};
\node[right=0cm of s,draw=none] {$\infty|1$};

\node (t) at (\dist + 2,0) {$t$};
\node[below left=-0.3cm and 0.5cm of t,draw=none] {$1|\infty$};

\node (a) at (\dist-4,2) {$a$};
\node[below right=-0.1cm and -0cm of a,draw=none] {$1|\infty$};

\node (b) at (\dist,2) {$b$};
\node (c) at (\dist-4,-2) {$c$};
\node (d) at (\dist,-2) {$d$};

\node[above right=0.25cm and 0cm of t,draw=none] {$c(e)|u(e)$};

\begin{scope}[every node/.style={}, yshift=-4.5cm]
\node at (0.5*\dist + 0.5,0) {$G$};
\end{scope}

\end{scope}

\foreach \x/\y in {a/b,a/c,b/c,c/d,  a2/b2,a2/c2,b2/c2,c2/d2}
\draw [line width=0.6mm] (\x) -- (\y) node[midway,draw,minimum size=0.25cm,fill=white,thin] (\x\y) {};
\foreach \x in {ab,bc}
\draw (t) -- (\x);
\draw (t) to [bend right=15] (ac);
\draw (t) to [bend left=6] (cd);
\foreach \x in {a,c}
\draw (\x) -- (s);
\draw (b) to [bend right=70] (s);
\draw (d) to [bend left=70] (s);
\end{tikzpicture}
\end{center}
\caption{An example of the reduction from \dks{} to \nfi{}.
The \dks{} instance is given by the graph on the left and the graph on the right is the corresponding \nfi{} instance with cost$|$capacity as follows $\delta(\{s\}):\infty|1$, $\delta(\{t\}):1|\infty$, and the bold edges $1|\infty$.}
\label{fig: reduction example}
\end{figure}

Next, we will formally define the auxiliary graph with its cost and capacity structure.
Let $H=(V,E)$ be a \dks{} instance and form $H'=(V',E')$ by subdividing every edge of $H$ with a new vertex.
We associate the vertices of $H'$ with the set $V\cup E$, i.e., $v\in V$ and $e\in E$ are both vertices of $V'=V\cup E$. 
We treat its edges similarly, each edge of $E'$ is a pair $ve\in V\times E$ where $e$ is incident with $v$ in $H$. 
Let $G$ be the graph with vertices $V'\cup\{s,t\}$ and edges $E'\cup\{sv\mid v\in V\}\cup\{et\mid e\in E\}$.
That is, add source and sink vertices $s$ and $t$, adjoin $s$ with each $V$-vertex, and adjoin $t$ with each $E$-vertex.
Figure~\ref{fig: reduction example} presents an example of the completed construction.
In $G$ we assign costs and capacities as follows
\begin{center}
\begin{tabular}{c|c|c|c}
Edge set & $\delta(s)$ & $E'$ & $\delta(t)$\\\hline
cost $c$ & $\infty$ & 1 & 1\\\hline
capacity $u$ & 1 & $\infty$ & $\infty$\\
\end{tabular}
\end{center}

It turns out that we may assume the interdiction solutions for $G$ take a particularly simple structure.
We call a set $R\subseteq E'\cup \delta(t)$ a \emph{cut solution} if there exists a cut $C\subseteq V$ in the graph $H$ such that
\[R = \left\{ ve\in E' \mid v\in C\text{ and }e\in \delta_H(C)\right\}\cup\{et\mid e\in E_H[C]\},\]
and we use the notation $R(C)$ to denote the cut solution associated with the cut $C$.
An example cut solution is shown in Figure~\ref{fig: cut solution}.
The \nfi{} objective function value of a cut solution is $\maxstflow(G-R(C)) = |V\setminus C|$, which is the number of vertices in $H$ that are not in $C$, and the cost of a cut solution is
\[c(R(C)) = |\delta_H(C)| + |E_H[C]| = |E|-|E_H[V\setminus C]|.\]
A first consequence is that if $C^*$ defines an optimal cut solution for \nfi{} on $G$, then $H[V\setminus C^*]$ has the fewest vertices of any subgraph with $|E[V\setminus C^*]|$ edges, that is $H[V\setminus C^*]$ is a densest $|E[V\setminus C^*]|$-edge subgraph.
The next lemma shows that from any interdiction set $R$ one can find a cut solution that has the same objective value and cost no more than $c(R)$.

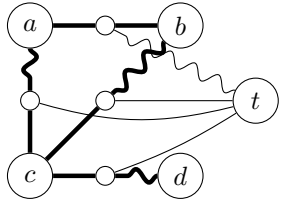
\begin{figure}
\begin{center}
\begin{tikzpicture}[scale=0.5]
\tikzstyle{every node}=[circle,draw,scale=1,minimum size=0.6cm,inner sep=0pt]
\pgfmathsetmacro{\dist}{6}

\node (t) at (\dist + 2,0) {$t$};
\node (a) at (\dist-4,2) {$a$};
\node (b) at (\dist,2) {$b$};
\node (c) at (\dist-4,-2) {$c$};
\node (d) at (\dist,-2) {$d$};

\foreach \x/\y in {a/b,a/c,b/c,d/c}
\path (\x) -- (\y) node [midway,minimum size=0.25cm,fill=white,thin] (\x\y) {};
\foreach \x/\y in {a/c,b/c,d/c}{
\draw[line width=0.6mm,decorate,decoration=snake] (\x) -- (\x\y);
\draw[line width=0.6mm] (\x\y) -- (\y);}
\draw[line width=0.6mm] (a) -- (ab);
\draw[line width=0.6mm] (b) -- (ab);
\draw[decorate,decoration=snake] (t) -- (ab);
\draw (t) -- (bc);
\draw (t) to [bend left=15] (ac);
\draw (t) to [bend left=6] (dc);
\end{tikzpicture}
\end{center}
\caption{The cut solution defined by $C=\{a,b,d\}\subset V(H)$ in the auxiliary graph $G$ of Figure~\ref{fig: reduction example}.  The four wavy edges form the cut solution.}
\label{fig: cut solution}
\end{figure}

\begin{lemma}\label{lem: cut solutions}
There is a polynomial time algorithm that takes as input the graph $G$ and a set of edges $R\subseteq E'\cup \delta(t)$ and returns a cut solution $R'$ with $c(R')\leq c(R)$ and \[\maxstflow(G-R')\leq\maxstflow(G-R).\]
\end{lemma}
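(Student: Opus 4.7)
The plan is to construct $R'$ as the cut solution $R(C^*)$ for the cut $C^*\subseteq V(H)$ consisting of those vertices $v\in V(H)$ from which $t$ is unreachable in $G-R$, and then verify both required inequalities. This $C^*$ is computable in polynomial time by a single reverse reachability search from $t$ in $G-R$, and $R(C^*)$ is then read off directly from the definition.

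The first step is to identify $\maxstflow(G-R)$ with $|V(H)\setminus C^*|$. Every $s$-$t$ path in $G$ begins with a unit-capacity edge $sv$ and then traverses only infinite-capacity edges, so $\maxstflow(G-R)$ equals the number of $v\in V(H)$ that can reach $t$ in $G-R$, which is exactly $|V(H)\setminus C^*|$. An analogous verification shows $\maxstflow(G-R(C^*))=|V(H)\setminus C^*|$: if $v\notin C^*$ then either $v$ has an incident edge $e$ with the other endpoint also outside $C^*$ (so $ve,et\notin R(C^*)$), or every incident edge $e$ lies in $\delta_H(C^*)$ with $v$ on the outside (so $ve\notin R(C^*)$ and $et\notin R(C^*)$ since $e\notin E_H[C^*]$); while if $v\in C^*$ then each incident $e$ lies in $E_H[C^*]$ (hence $et\in R(C^*)$) or in $\delta_H(C^*)$ (hence $ve\in R(C^*)$). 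So the flow inequality holds with equality.

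The heart of the argument is the cost bound $c(R(C^*))\leq c(R)$, which I would prove by double counting. Let $T=\{e\in E(H):et\in R\}$ and $N_R(v)=\{e:ve\in R\}$. By the defining property of $C^*$, every $v\in C^*$ satisfies $\deg_H(v)\leq |\{e\in T:v\in e\}|+|N_R(v)|$. Summing over $v\in C^*$ and using $\sum_{v\in C^*}\deg_H(v)=2|E_H[C^*]|+|\delta_H(C^*)|$ gives
\[
2|E_H[C^*]|+|\delta_H(C^*)|\;\leq\; 2|T\cap E_H[C^*]|+|T\cap \delta_H(C^*)|+\sum_{v\in C^*}|N_R(v)|.
\]
Combining this with the obvious lower bound $c(R)\geq |T\cap E_H[C^*]|+|T\cap\delta_H(C^*)|+\sum_{v\in C^*}|N_R(v)|$ (dropping all contributions from edges of $R$ not listed) and rearranging yields $c(R)\geq |E_H[C^*]|+|\delta_H(C^*)|+\bigl(|E_H[C^*]|-|T\cap E_H[C^*]|\bigr)\geq c(R(C^*))$.

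The main obstacle is exactly this counting step, because a single edge $et\in R$ may cut off two vertices (both endpoints of $e$) at cost $1$, while interdicting $ue$ and $ve$ achieves the same at cost $2$; a naive edge-by-edge charging therefore cannot work. The factor-of-two handicap for edges inside $C^*$ is precisely what gets absorbed by the $2|E_H[C^*]|$ term appearing on the left-hand side of the display above, which is what makes the cut solution $R(C^*)$ no more expensive than $R$.
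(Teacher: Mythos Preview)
Your argument is correct and takes a genuinely different route from the paper. The paper transforms $R$ into a cut solution by a sequence of local operations: it removes redundant edges, merges components of the same type (flow or no-flow), and then swaps edges on the boundary between a flow component and a no-flow component, tracking cost and objective through each step. You instead jump straight to the target cut $C^*$ (the $V$-vertices lying in no-flow components) and establish the cost bound by a clean double-counting of degrees inside $C^*$. Your approach is shorter and avoids the case analysis of the paper's third transformation; the paper's approach is more operational and makes it visually clearer how $R$ morphs into a cut solution.

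Two small points are worth tightening. First, since $G$ is undirected and the edges $sv$ are never removed, literal reachability ``in $G-R$'' connects every $V$-vertex to $t$ as soon as the flow is positive; what you actually mean (and what both your max-flow identification and the key inequality $\deg_H(v)\le |\{e\in T:v\in e\}|+|N_R(v)|$ use) is reachability in $(G-R)-s$, i.e., along infinite-capacity edges only. Second, your check that $v\in C^*$ cannot reach $t$ in $G-R(C^*)$ only explicitly blocks the length-two paths through a single $E$-vertex; one more line (any $E$-vertex reachable from such a $v$ without passing through $s$ must lie in $E_H[C^*]$ and hence has its $t$-edge removed) closes the gap, or you may simply invoke the formula $\maxstflow(G-R(C))=|V\setminus C|$ stated before the lemma.
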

\begin{proof}
The algorithm applies successive transformations to $R$ that preserve its objective value.
Each transformation results in a net decrease in the cost of $R$.
The main observation needed to evaluate the impact of the transformations is the following.
Let $C\subseteq V'$ be the vertices of a connected component of $(V', E'\setminus R)$.
Every maximum \st-flow in $G-R$ assigns flow $|C\cap V|$ through $C$, unless $E[C,t]\subseteq R$, in which case the flow is zero.
We call a component $C$ a \emph{flow component} or a \emph{no-flow component} according as the flow through the $C$ is $|C\cap V|$ or 0.
The maximum flow in $G-R$ is $\sum_C |C\cap V|$, where the sum is taken over the all of the flow components.

If there is an edge $ve\in R\cap E'$ whose endpoints lie in the same connected component of $H'-R$, then $R-ve$ is an interdiction solution defining the same flow and no-flow components, hence the same objective value.
Similarly, given a flow component $C$ we can remove from $R$ any edge of $E[C,t]$, because, upon doing so, the flow through the component remains $|C\cap V|$ and the cost is decreased.
The first step of the algorithm removes all of these edges from $R$.

Next, the algorithm ``merges'' any two components $C_1$ and $C_2$ of the same type by removing from $R$ any edge in $E[C_1,C_2]$.
This does not affect the objective value, because $ve\in V'$ is in a flow component before the deletion if and only if it is in a flow component after the deletion.
Removing these edges from $R$ is the second step of the algorithm.

Last, let $C_0$ be a no-flow component, let $C_1$ be a flow component, and let $ve\in R\cap E[C_0,C_1]$.
Let $e = vw$ and two options appear. 
Either $v\in C_0$ and $w\in C_1$ or vice versa.
No change is made if $v\in C_0$ and $w\in C_1$.
If $v\in C_1$ and $w\in C_0$, then $e\in C_0$, which implies that $et\in R$ because $C_0$ is a no-flow component.
In this case, the algorithm replaces $R$ by $R - ve + we - et$.
Doing so clearly reduces its cost. 
The new components of $G-R$ are $C_0-e$, a no-flow component, and $C_1+e$, a flow component.
The flow through $C_1+e$ is unchanged at $|(C_1+e)\cap V| = |C_1\cap V|$.
Performing these changes is the last step of the algorithm.
It returns as the set $R'$ the modified set $R$.
$R'$ is the cut solution associated with the $C' = \cup_{C}(C\cap V)$, where the union is taken over all no-flow components of $G-R'$.
\end{proof}

We are now in a position to complete the hardness proof.
Let $\ell^*$ denote the number of edges in a densest $k$-subgraph of $H$.  
The reduction works by finding an approximate \nfi{} interdiction solution with budget $|E|-\ell^*$, transforming it to a cut solution $R(C)$ with Lemma~\ref{lem: cut solutions}, interpreting $H[V\setminus C]$ an approximate densest $\ell^*$-subgraph, and then sampling $k$-vertices from $V\setminus C$.

\begin{theorem}\label{thm: hardness}
If there exists a polynomial time $\alpha(n)$-approximation algorithm for \nfi{} then there exists a polynomial time $2\alpha(n^2)^2$-approximation algorithm for \dks{}, where in each case $n$ is the number of vertices in the corresponding instance.
\end{theorem}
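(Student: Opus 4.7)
The plan is to iterate the hypothetical $\alpha$-approximation algorithm for \nfi{} over all plausible budgets, convert each output to a cut solution via Lemma~\ref{lem: cut solutions}, and then apply Lemma~\ref{lem: subsampling} to extract a $k$-vertex subgraph of high density; returning the best candidate over all budgets will yield the desired \dks{} approximation.

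First I would construct the auxiliary graph $G$ associated to a given \dks{} instance $(H,k)$ with $n=|V(H)|$, as defined before Lemma~\ref{lem: cut solutions}. Since $|V(G)|=n+|E(H)|+2\leq n^2$ for $n\geq 3$ (small cases handled by brute force), the \nfi{} approximation ratio invoked will be at most $\alpha(n^2)$. For each integer $B\in\{0,1,\ldots,|E(H)|\}$ I would run the \nfi{} $\alpha$-approximation on $(G,B)$ to obtain an interdiction set $R_B$, note that $R_B\subseteq E'\cup\delta(t)$ because edges of $\delta(s)$ have infinite cost, apply Lemma~\ref{lem: cut solutions} to convert $R_B$ into a cut solution $R(C_B)$ with the same objective value and no greater cost, and finally apply Lemma~\ref{lem: subsampling} to $H[V\setminus C_B]$ to produce a candidate $k$-subgraph (padding with arbitrary unused vertices when $|V\setminus C_B|<k$). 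The final output will be the candidate with the most edges.

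The heart of the analysis is the iteration $B=|E(H)|-\ell^*$, where $\ell^*$ is the number of edges in a densest $k$-subgraph $W^*$ of $H$. Setting $C^*=V(H)\setminus W^*$ yields a cut solution $R(C^*)$ of cost $|E(H)|-\ell^*=B$ and \nfi{} objective $|W^*|=k$, so the \nfi{} optimum at budget $B$ is at most $k$. The $\alpha$-approximation followed by Lemma~\ref{lem: cut solutions} then produces a cut solution $R(C_B)$ with $|V\setminus C_B|\leq\alpha k$ and cost at most $B$, which forces $|E_H[V\setminus C_B]|\geq\ell^*$. Feeding $H[V\setminus C_B]$ into Lemma~\ref{lem: subsampling} returns a $k$-vertex induced subgraph with at least
\[
\frac{k(k-1)}{|V\setminus C_B|(|V\setminus C_B|-1)}\,|E_H[V\setminus C_B]|\;\geq\;\frac{k(k-1)}{\alpha k(\alpha k-1)}\,\ell^*\;\geq\;\frac{\ell^*}{2\alpha^2}
\]
edges; the degenerate cases $k=1$ and $|V\setminus C_B|<k$ are trivial (in the latter, $H[V\setminus C_B]$ already has $\ell^*$ edges on fewer than $k$ vertices, so any completion to a $k$-vertex set is a densest $k$-subgraph).

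The main obstacle is mostly bookkeeping rather than substantive difficulty: $\ell^*$ is unknown in advance, so the enumeration over the $|E(H)|+1 = O(n^2)$ possible budgets is what makes the correct instance available; this adds only a polynomial overhead, and each iteration consists of the \nfi{} approximation and two polynomial-time lemmas. The best returned candidate therefore has at least $\ell^*/(2\alpha(n^2)^2)$ edges, establishing the claimed $2\alpha(n^2)^2$-approximation for \dks{}.
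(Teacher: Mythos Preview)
Your proposal is correct and follows essentially the same approach as the paper: iterate the \nfi{} approximation over all budgets $B=|E(H)|-\ell$, use Lemma~\ref{lem: cut solutions} to obtain a cut solution whose complement in $H$ has at least $\ell^*$ edges on at most $\alpha k$ vertices when $B=|E(H)|-\ell^*$, and then invoke Lemma~\ref{lem: subsampling} to extract a $k$-subgraph with at least $\ell^*/(2\alpha^2)$ edges. The only cosmetic difference is that the paper loops over $\ell\in\{1,\dots,\binom{k}{2}\}$ rather than over $B$, and first states the argument for the objective value before noting that the actual subgraph is recovered via Lemmas~\ref{lem: cut solutions} and~\ref{lem: subsampling}.
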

\begin{proof}
Let $H=(V,E)$ be an instance of \dks{}.
We first construct the auxiliary graph $G$ with vertices $V'\cup\{s,t\}$ and edges $E'\cup\{sv\mid v\in V\}\cup\{et\mid e\in E\}$ as described above.

Here is the \dks{} approximation algorithm.
For each $\ell=1,2,\ldots,\binom{k}{2}$, run the \nfi{} approximation algorithm on $G$ with budget $B=|E|-\ell$ and let $v_\ell$ denote the residual flow value for this run.  
If $v_\ell\geq k$ then let $e_\ell = \ell\frac{k(k-1)}{v_\ell(v_\ell-1)}$ and otherwise let $e_\ell=\ell$.
Return $\max_\ell e_\ell$ as the approximate number of edges in a densest $k$-subgraph.

Now we analyze the algorithm. 
Let $R_\ell\subseteq E'\cup\delta(t)$ be an interdiction set where $G-R_\ell$ has maximum \st-flow equal to $v_\ell$. 
By Lemma~\ref{lem: cut solutions}, we can assume that $R_\ell$ is a cut solution.
Let $C_\ell\subseteq V$ be the associated cut, which has $|C_\ell|=v_\ell$ and $E[C_\ell]\geq \ell$, by the feasibility of $R_\ell$.
In the case of $v_\ell<k$, any $k$-subgraph containing $C_\ell$ has at least $E[C_\ell]$ edges.
If $v_\ell\geq k$, then Lemma~\ref{lem: subsampling} implies that the induced subgraph $H[C_\ell]$ has itself a $k$-subgraph with density at least \[\frac{k-1}{v_\ell-1}d(H[C_\ell])\geq \frac{k-1}{v_\ell-1}\frac{\ell}{v_\ell}=\frac{e_\ell}{k}.\]
Therefore, for all $\ell$, $e_\ell$ is a lower bound on the number of edges in some $k$-subgraph of $H$.

Let $\ell^*$ be the number of edges in a densest $k$-subgraph of $H$.
The optimal objective value of the corresponding \nfi{} problem when the budget is $|E|-\ell^*$ is at most $k$. 
Hence, $v_{\ell^*}\leq \alpha(|V|^2) k$ since $G$ has $|V|+|E|+2\leq |V|^2$ vertices.
Substituting into the definition of $e_{\ell^*}$ we have either $e_{\ell^*}=\ell^*$ or
\begin{equation}\label{eq: hardness factor 2}
e_{\ell^*} = \ell^*\frac{k(k-1)}{v_{\ell^*}(v_{\ell^*}-1)} \geq \frac{1}{2\alpha(|V|^2)^2}\ell^*,
\end{equation}
according as $v_{\ell^*}$ is smaller or larger than $k$.
This completes the proof of correctness.
The running time bound follows immediately from the description of the algorithm.
\end{proof}
\begin{corollary}
Let $\epsilon>0$. If there is a $\frac{1}{\sqrt{2}}n^{\epsilon/4}$-approximation for \nfi{} then there is a $n^\epsilon$-approximation for \dks{}.
In other words, any $n^{\epsilon}$-approximation-hardness for \dks{} implies a $\frac{1}{\sqrt{2}} n^{\epsilon/4}$-approximation-hardness for \nfi{}.

\end{corollary}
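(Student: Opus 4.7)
The plan is to derive this corollary as a direct substitution into Theorem~\ref{thm: hardness}. Writing $\alpha(n) = \tfrac{1}{\sqrt{2}}\, n^{\epsilon/4}$ and plugging into the conclusion of the theorem, I compute $\alpha(n^2) = \tfrac{1}{\sqrt{2}}\, (n^2)^{\epsilon/4} = \tfrac{1}{\sqrt{2}}\, n^{\epsilon/2}$, and hence $2\alpha(n^2)^2 = 2 \cdot \tfrac{1}{2} \cdot n^{\epsilon} = n^{\epsilon}$. Thus a $\tfrac{1}{\sqrt{2}} n^{\epsilon/4}$-approximation for \nfi{} yields, via the reduction of Theorem~\ref{thm: hardness}, a polynomial-time $n^{\epsilon}$-approximation for \dks{}, giving the first statement.

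For the contrapositive formulation in the second sentence, I would simply observe that if no polynomial-time $n^{\epsilon}$-approximation for \dks{} exists (under whatever complexity assumption is being invoked), then no polynomial-time $\tfrac{1}{\sqrt{2}} n^{\epsilon/4}$-approximation for \nfi{} can exist either, since the former would follow from the latter by the calculation above.

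I do not expect any real obstacle here: the only thing to be careful about is that the exponent $\epsilon/4$ combined with the squaring in $\alpha(n^2)^2$ exactly absorbs the factor $n$ to $n^{\epsilon}$, and that the leading constant $\tfrac{1}{\sqrt{2}}$ is chosen so that the factor $2$ in front of $\alpha(n^2)^2$ is cancelled. Both are confirmed by the arithmetic above, so the corollary follows in one line from Theorem~\ref{thm: hardness}.
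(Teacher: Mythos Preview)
Your proposal is correct and matches the paper's approach: the corollary is stated without proof there, being an immediate substitution into Theorem~\ref{thm: hardness}, and your arithmetic verifying $2\alpha(n^2)^2 = n^{\epsilon}$ for $\alpha(n)=\tfrac{1}{\sqrt{2}}n^{\epsilon/4}$ is exactly what is needed.
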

The factor 2 in Theorem~\ref{thm: hardness} and $\sqrt{2}$ in the corollary can be reduced to any constant larger than 1 by assuming $k=\omega(1)$, which must hold for all hard instances, and adjusting the upper bound in~\eqref{eq: hardness factor 2}.

If the \nfi{} approximation algorithm in Theorem~\ref{thm: hardness} also returns the interdiction sets $R_\ell\subseteq E'\cup\delta(t)$, then the algorithm from the last proof also produces an approximate densest $k$-subgraph.  
Indeed, by Lemma~\ref{lem: cut solutions}, we can find a cut solution from $R_\ell$ and the associated cut $C_\ell\subseteq V$.
For each cut $C_\ell$ with at least $k$ vertices we can use the algorithm of Lemma~\ref{lem: subsampling} to find a $k$-subgraph $H$ with at least $e_\ell$ edges and return the largest subgraph among them.

Finally, it is worth pointing out that this reduction is efficient in its use of the edges in the graph.
In deriving the approximation ratio, we have used the bound $m+n+2\leq n^2$, but one also has $m+n+2\leq 3m$.  
Hence, if there are hard instances of \dks{} with $m=o(n^2)$ edges then the gap between the two approximation ratios in Theorem~\ref{thm: hardness} is narrowed to $\alpha(O(m))^2$ which can be roughly as small as $\alpha(n)^2$.
It is also worth pointing out that $G$ is bipartite and all costs and capacities are either $1$ or $\infty$, so the hardness applies even in this restricted case.

\section{A $2(n-1)$-approximation algorithm}\label{sec: approximation}

It was first observed by Phillips~\cite{phillips1993network} that there is always an optimal solution that is a subset of some cut~$\delta(C)$.
If we can identify a good cut to attack then only a Knapsack problem stands in our way.
Accordingly, the crux of the Network Flow Interdiction problem is finding the right cut to attack. 
The strategy of the algorithm presented here is, at a basic level, to remove all of the edges whose capacity-to-cost ratio is very low, and then find a minimum cost \st-cut in the resulting graph.
The factor of 2 arises because we actually compare against a well structured 2-approximate solution rather than an optimal solution; it is an artifact of the ``Knapsack portion'' of the problem.

Let us begin illustrating the main idea with a simpler $(n-1)$-approximation algorithm for the special case that all interdiction costs are $1$.
We are given a graph $G=(V,E)$ with capacities $u:E\to\N$ and a budget $B\in \N$ as well as the vertices $s,t\in V$.

The first step is to order the edges by capacity so that $u(e_1)\leq u(e_2)\leq\cdots\leq u(e_m)$ with ties broken arbitrarily.
There is an optimal solution that chooses some cut $C^*\subseteq V$ and removes the $B$ arcs with the highest labels, i.e., highest capacities, from the cut $\delta(C^*)$.
Let $R^*$ be this set of arcs and let $j^* = \max\{j\mid e_j\in\delta(C^*)\setminus R^*\}$ be the identity of a highest capacity edge in this cut that is not removed. 
We can ``guess'' $j^*$ simply by trying each of the $m$ possibilities.
Let $E_\leq = \{e_j\in E \mid j\leq j^*\}$ and $E_> = E\setminus E_\leq$, and create two subgraphs of $G$, $G_\leq=(V,E_\leq)$ and~$G_>=(V,E_>)$.
The optimal interdiction solution is a budget feasible \st~cut in $G_>$, and it is a \st~cut of capacity~$\OPT$ in $G_\leq$.
The edges of these two cuts partition $\delta(C^*)$.

Now, another guessing step.
We guess $f^*=\{w^*,v^*\}$ to be a pair of vertices separated by $C^*$ that has the highest value minimum $w$-$v$-cut in $G_\leq$, among all those pairs of vertices $w,v$ separated by $C^*$.
Let $C_{f*}$ denote any minimum capacity $w^*$-$v^*$-cut in $G_\leq$.
Since $C^*$ is a $w^*$-$v^*$-cut we have by minimality of $C_{f^*}$ that
\[\OPT = u(\delta_G(C^*)\setminus R^*) = u(\delta_\leq(C^*))\geq u(\delta_\leq(C_{f^*})),\] 
which means that the capacity of $C_{f^*}$ in $G_\leq$ is a lower bound on the optimal objective value.
Next, we contract in $G_>$ every pair of vertices $\{w,v\}$ that have minimum $w$-$v$-cut capacity in $G_\leq$ larger than $u(\delta_\leq(C_{f^*}))$.
Let $G_>'$ be the resulting graph and let $C$ be a minimum cost \st-cut in $G_>'$.
Notice that, by the definition of $f^*$, no pair of vertices on opposing sides of the budget in  feasible cut $C^*$ gets contracted. 
Therefore, since $C^*$ is a budget feasible cut in $G_>$, it follows that $R=\delta(C)\cap E(G'_>) = \delta(C)\cap E_>$ is budget feasible by its minimality.  
$R$ is the set returned by the algorithm.

It remains to prove that $R$ presents an $(n-1)$-approximation; for this, a Gomory-Hu tree is helpful.
Recall that a Gomory-Hu tree for a graph $H=(V,E)$, with edge capacities $u$, is a tree~$T$ on vertices~$V$ along with weights~$\kappa$ on the edges of $T$. 
The edge weights have the property that, for any $w,v\in V$ and any minimum weight edge~$e$ on the unique $w$-$v$~path in~$T$, the two connected components of $T-e$ describe a minimum capacity $w$-$v$-cut in $H$, and the capacity of the cut is equal to $\kappa(e)$.
A Gomory-Hu tree always exists and one can be found in $O(n^3\sqrt{m})$
time (see, e.g, \cite{korte2002combinatorial}).

Consider the cut $C$ in the original graph~$G$.
The edges of $\delta(C)\setminus R$ are the same as those in $\delta_{\leq}(C)$.
For every pair $w,v\in V$ that is not contracted, the capacity of a minimum capacity $w$-$v$-cut in $G_\leq$ is bounded above by $u(\delta_\leq(C_{f^*}))\leq \OPT$.
In order to bound the capacity of $\delta_{\leq}(C)$ it suffices to show that it can be covered by a few minimum cuts.
Obviously, $|\delta_\leq(C)|$ minimum cuts suffices, but this may be arbitrarily large.
We can do better.
\begin{lemma}\label{lem: cut cover}
For any undirected, capacitated graph $H=(V,E)$ and any cut $C\subseteq V$, there is a set $P$ of at most $|V|-1$ pairs of vertices and, for each $\{w,v\}\in P$, a minimum capacity $w$-$v$-cut~$C_{wv}$ such that $\delta(C)\subseteq \cup_{p\in P}\delta(C_{p})$.
\end{lemma}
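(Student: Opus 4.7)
The plan is to obtain the required $|V|-1$ covering cuts directly from a Gomory-Hu tree of $H$, which the paragraph preceding the lemma already introduces. Recall that a Gomory-Hu tree $T$ of $H$ has exactly $|V|-1$ edges, and for every tree edge $f=\{w,v\}$ the partition of $V$ induced by the two components of $T-f$ is a minimum $w$-$v$-cut in $H$. I would simply take $P$ to be the set of endpoint pairs of the edges of $T$, and for each $p=\{w,v\}\in P$ let $C_{wv}$ be the vertex set of one component of $T-f$. This immediately gives $|P|\le |V|-1$ and a minimum $w$-$v$-cut for every pair in $P$.

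To verify the covering property, I would take an arbitrary edge $e=xy\in\delta(C)$, note that $x\ne y$, and examine the unique $x$-$y$ path in $T$. That path contains at least one tree edge $f=\{w,v\}$, and for any such $f$ the endpoints $x$ and $y$ lie in different components of $T-f$. Consequently $xy\in\delta(C_{wv})$, so $e$ is covered, which establishes $\delta(C)\subseteq\bigcup_{p\in P}\delta(C_p)$.

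Notice that this argument never uses the hypothesis $e\in\delta(C)$: the same reasoning shows that every edge of $H$ with distinct endpoints lies in $\delta(C_p)$ for some $p\in P$, so the lemma actually holds with $\delta(C)$ replaced by $E$. I do not anticipate a genuine obstacle; the only thing to check is the elementary observation that whenever two vertices are joined by an edge of $H$, each edge on their $T$-path separates them, and this is immediate from the tree structure.
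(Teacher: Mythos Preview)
Your proof is correct and uses the same Gomory--Hu tree idea as the paper. The only difference is that the paper takes $P=\delta_T(C)$, i.e., only the tree edges crossing $C$, and then observes that for any $xy\in\delta_H(C)$ the $x$-$y$ path in $T$ must contain an edge of $\delta_T(C)$; your choice $P=E(T)$ is a (harmless) superset that still satisfies $|P|\le |V|-1$, and as you note it even covers all of $E$ rather than just $\delta(C)$.
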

\begin{proof}
Let $T$ be a Gomory-Hu tree for $H$ and take $P=\delta_T(C)$.
For each $\{w,v\}\in P$ the cut $C_{wv}$ is the cut formed by deleting the edge $wv$ from $T$ and taking either of the two connected components.
$|P|\leq |V|-1$ because $T$ is a tree, and each $C_{wv}$ is a minimum $w$-$v$-cut because $T$ is a Gomory-Hu tree.
For each $w'v'\in\delta_H(C)$, there is an edge $wv \in \delta_T(C)$ on the unique $w'$-$v'$-path in $T$.
Thus $w'v'\in C_{wv}$, which completes the proof.
\end{proof}
Lemma~\ref{lem: cut cover} is easily seen to be sharp. 
If $H$ is a star the cut that separates the center from the leaves cannot be covered by fewer than $n-1$ minimum cuts.

Applying Lemma~\ref{lem: cut cover} to the output of the algorithm, we have
\[u(\delta(C)\setminus R)\leq \sum_{p\in P}u(\delta_\leq(C_p))\leq (n-1)u(\delta_\leq(C_{f^*})),\]
where the last inequality follows from the contraction step.
On the other hand, $f^*$ crosses $C^*$ by definition, so $u(\delta_\leq(C_{f^*}))\leq\OPT$ because it is the weight of a minimum cut separating its endpoints in $G_\leq$.
Combining these inequalities we have $u(\delta(C)\setminus R)\leq (n-1)\OPT$, which completes the analysis of the algorithm in the special case that all of the costs are 1.
The Gomory-Hu tree is also useful for implementing the algorithm because it is an efficient means to organize the computation of the minimum cut values and to organize the iteration over all of the possible pair-wise minimum cut values.

For general costs, the optimal solution can be interpreted
as first choosing a cut $C^*$ to attack and then interdicting
an optimal set of edges in $\delta(C^*)$. Notice that the problem
of interdicting an optimal set of edges in $\delta(C^*)$ is a
Knapsack problem, which can also be interpreted as a Knapsack
cover problem, when we want to select an optimal set of edges
in $\delta(C^*)$ \emph{not} to be interdicted.
From the Knapsack Cover perspective, the set of edges in the cut that remains after interdiction is a minimum capacity subset of $\delta(C^*)$ with cost at least $c(\delta(C^*))-B$.
This formulation is convenient for us because we are interested in approximating the remaining capacity of the cut after interdiction.

In the $c=1$ case, we were able to exploit the simple structure of the interdiction solution within $\delta(C^*)$ in order to guess $j^*$, the maximum capacity edge that is not removed by an optimal solution.
We used the fact that the edges remaining in $\delta(C^*)$ after interdiction are those with lowest capacity in order to split $G$ into $G_\leq$ and $G_>$.
Optimal Knapsack Cover solutions are much more complex, but we can exploit a much simpler structure when focusing on approximate Knapsack Cover solutions.
The point is that it is nearly optimal to remove the edges with the highest ratio of capacity to cost, as long as the total cost of the edges removed is a sizable fraction of the budget.

Let us formalize that idea.
In the Knapsack Cover problem, we are given as input a set of items $E=\{e_1,e_2,\ldots,e_m\}$ with values $u:E\to\N$, costs $c:E\to\N$, and minimum expenditure $B'$.
The goal is to output a minimum value subset $S\subseteq E$ such that $c(S)\geq B'$.
Let \[\rho(e) \coloneqq  u(e)/c(e)\] be the \emph{efficiency} of item $e$.
\begin{fact}\label{fact: knapsack 2-approximation}
The following algorithm is an $O(m^2)$ time $2$-approximation algorithm for Knapsack Cover.  
\begin{enumerate}[{\bf \algi 1}]\itemsep=-3pt
\item Guess $f^*$, the highest value element in some optimal solution and add it to $S$.
\item Remove from consideration all other items $e$ with $u(e)\geq u(f^*)$.  
\item Greedily add the remaining items with lowest efficiency to $S$ until $c(S)\geq B'$.
\item Return $S$.
\end{enumerate}
\end{fact}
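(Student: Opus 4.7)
The plan is to bound the greedy output against the LP relaxation of the residual cover problem obtained after fixing $f^*$, and to pay for the single ``overshoot'' item by the observation that its value is at most $u(f^*) \leq \OPT$. Since Step 1 tries all $m$ candidates for $f^*$ and returns the best feasible output seen, it suffices to analyze the iteration in which $f^*$ is an element of maximum $u$-value in some fixed optimal solution $S^*$; breaking ties so that no element of $S^* \setminus \{f^*\}$ shares the value $u(f^*)$, every element of $S^* \setminus \{f^*\}$ survives Step 2. Consequently, $S^* \setminus \{f^*\}$ is a feasible integer solution to the residual cover problem of reaching threshold $B' - c(f^*)$ using only items of $E' \coloneqq \{e \in E \setminus \{f^*\} : u(e) < u(f^*)\}$.

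Enumerate $E'$ in increasing order of efficiency $\rho = u/c$, say $\rho(e_1) \leq \rho(e_2) \leq \cdots$, and let $k$ be the first index for which $c(f^*) + \sum_{i \leq k} c(e_i) \geq B'$. The fractional relaxation
\[
\min\Bigl\{\, \sum_{e \in E'} u(e) x_e \ :\ \sum_{e \in E'} c(e) x_e \geq B' - c(f^*),\ x \in [0,1]^{E'} \Bigr\}
\]
is optimized by the sorted-fractional solution setting $x_{e_i} = 1$ for $i < k$ and $x_{e_k} \in [0,1]$ just large enough to satisfy the cost constraint, as seen by a standard exchange argument swapping mass between any $e, f$ with $x_e < 1$, $x_f > 0$, and $\rho(e) < \rho(f)$. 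In particular $\sum_{i < k} u(e_i) \leq \mathrm{LP}^*$, and since $S^* \setminus \{f^*\}$ is integer feasible, $\mathrm{LP}^* \leq u(S^* \setminus \{f^*\}) = \OPT - u(f^*)$. The algorithm outputs $S = \{f^*, e_1, \ldots, e_k\}$, and using $u(e_k) \leq u(f^*) \leq \OPT$ (the first inequality because $e_k \in E'$, the second because $f^* \in S^*$ and values are nonnegative), we conclude
\[
u(S) = u(f^*) + \sum_{i < k} u(e_i) + u(e_k) \leq u(f^*) + (\OPT - u(f^*)) + \OPT = 2\,\OPT.
\]

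For the running time I would sort $E$ by $\rho$ once in $O(m \log m)$ and then, for each of the $m$ guesses of $f^*$, perform the filter of Step 2 and the greedy pass of Step 3 in $O(m)$, yielding $O(m^2)$ overall. The main subtlety I expect is the tie-breaking implicit in the guess: if several items of a particular optimal $S^*$ share the maximum $u$-value, Step 2 could delete some of them and invalidate the claim that $S^* \setminus \{f^*\} \subseteq E'$. This is handled purely at the level of the analysis by refining the $u$-ordering with a fixed secondary key and then choosing $f^*$ together with $S^*$ compatibly with this refinement; it has no effect on the algorithm or the approximation ratio.
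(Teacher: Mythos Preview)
The paper does not prove this fact; it is stated as well known, so there is nothing to compare your argument against. Your LP-relaxation argument for the residual cover instance is the standard one and is correct: the sorted-fractional solution is LP-optimal, so $\sum_{i<k} u(e_i) \le \mathrm{LP}^* \le \OPT - u(f^*)$, and the single overshoot item $e_k$ is paid for by $u(e_k) < u(f^*) \le \OPT$.

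One small correction to your closing paragraph: the tie-handling is \emph{not} purely at the level of the analysis. If Step~2 is read literally with $u(e) \geq u(f^*)$ and several items share the maximum value, the algorithm can fail to output anything feasible at all---take all items with $u\equiv 1$, $c\equiv 1$, and $B'\ge 2$; every guess of $f^*$ discards all other items and leaves $c(S)=1<B'$. The fix you describe (a fixed total order refining $u$) must be used by the algorithm in Step~2, not only in the analysis. With that understood, your proof is complete.
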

It is well known that guessing the $k$ highest value elements in an optimal Knapsack Cover solution, rather than just $1$, leads to a $(1+\frac{1}{k})$-approximation algorithm at the expense of extending the running time to $O(m^{1+k})$.
We make use of this improvement later.

The $2(n-1)$-approximation algorithm for general costs operates in much the same way as the algorithm for the special case $c=1$, described above.  
The main difference is that we will compare against the $2$-approximation algorithm that finds an optimal cut $C^*$ to attack and attacks it with the $2$-approximation algorithm from Fact~\ref{fact: knapsack 2-approximation}.
Doing so involves one extra guessing step.
The full procedure is Algorithm~\ref{algo: approximation}.

\begin{algorithm}[t]
{\bf Input:} $G=(V,E)$, $s,t\in V$, $u:E\to\N$, $c:E\to\N$\\
{\bf Output:} $R\subseteq E$ with $c(R)\leq B$ and $\maxstflow(G-R)\leq (n-1)\OPT$.
\begin{enumerate}[{\bf 1}]\itemsep=-3pt
\item Order the edges so that $\rho(e_1)\leq\rho(e_2)\leq\cdots\leq\rho(e_m)$.
\item For each $j=1,2,\ldots,m$ and $e\in E$
\item\algi Let $E_\leq = \{e_i \mid i\leq j\text{ and }u(e_i)\leq u(e)\}$ and $E_>=E\setminus E_\leq$.
\item\algi Construct a Gomory-Hu tree $(T,\kappa)$ for $(V,E_\leq)$ with capacities $u$.
\item\algi For each $f$ in $E(T)$
\item\algii Form $G_>'$ by contracting in $(V,E_>)$ all pairs of vertices $w,v$ such that $wv\in E(T)$ with $\kappa(wv)>\kappa(f)$.\label{algo: step: contract}
\item\algii Find a minimum cost \st-cut $C$ in $G_>'$.
\item\algii Store $R_{j,f,e} = \delta(C)\cap E_>$ if $c(R_{j,e,f})\leq B$.
\item Return the stored set that minimizes $\maxstflow(G-R_{j,e,f})$.
\end{enumerate}
\caption{A $2(n-1)$-approximation algorithm for \nfi{}.}\label{algo: approximation}
\end{algorithm}

\begin{theorem}\label{thm: approximation}
Algorithm~\ref{algo: approximation} is a $2(n-1)$-approximation algorithm for \nfi{}.
The running time is $O(m^{5/2}n^3)$.
\end{theorem}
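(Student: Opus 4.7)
The plan is to combine the cut-cover argument from the unit-cost case with the 2-approximation for Knapsack Cover from Fact~\ref{fact: knapsack 2-approximation}. First I would fix an optimal pair $(C^*, R^*)$ with $R^* \subseteq \delta(C^*)$ and apply Fact~\ref{fact: knapsack 2-approximation} to the Knapsack Cover instance on $\delta(C^*)$ with threshold $c(\delta(C^*)) - B$. This yields a kept set $K = \{f^*_{KC}\} \cup K'$ with $u(K) \leq 2\,\OPT$ and a companion interdiction $\widehat{R} = \delta(C^*) \setminus K$ with $c(\widehat{R}) \leq B$. I would then argue that the algorithm's outer loop considers a guess $(j, e, f)$ for which the returned $R_{j,e,f}$ satisfies both $c(R_{j,e,f}) \leq B$ and $\maxstflow(G - R_{j,e,f}) \leq 2(n-1)\,\OPT$.

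The key guess takes $e = f^*_{KC}$, $j$ equal to the largest efficiency index of any element of $K$, and $f$ an edge of the Gomory-Hu tree $T$ of $G_\leq$ lying in $\delta_T(C^*)$ of maximum weight~$\kappa$. With these choices $K \subseteq E_\leq$, so $\delta(C^*) \cap E_> \subseteq \widehat{R}$ and $C^*$ is a feasible $s$-$t$-cut in $G_>'$ of cost at most $c(\widehat{R}) \leq B$; the minimum-cost cut $C$ returned in $G_>'$ therefore yields a stored $R_{j,e,f}$ of cost at most $B$. The choice of $f$ prevents contraction of any pair separated by $C^*$, so every surviving tree edge in $\delta_T(C)$ has $\kappa \leq \kappa(f)$, and Lemma~\ref{lem: cut cover} applied to $G_\leq$ with $C$ gives
\[
u(\delta_\leq(C)) \leq \sum_{wv \in \delta_T(C)} \kappa(wv) \leq (n-1)\kappa(f).
\]
Because $C^*$ separates the endpoints of $f$, we also have $\kappa(f) \leq u(\delta_\leq(C^*))$, and so $\maxstflow(G - R_{j,e,f}) \leq u(\delta(C) \setminus R_{j,e,f}) = u(\delta_\leq(C)) \leq (n-1)\, u(\delta_\leq(C^*))$.

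The crux of the analysis is the remaining bound $u(\delta_\leq(C^*)) \leq 2\,\OPT$, which reduces to controlling the capacity of cut edges in $\delta(C^*) \cap E_\leq$ that do not belong to $K$. The bound is automatic when the efficiency index of $f^*_{KC}$ does not exceed the maximum index in $K'$: then the Fact~\ref{fact: knapsack 2-approximation} greedy processed every $\delta(C^*)$-edge of value less than $u(f^*_{KC})$ up through index $j$, leaving $\widehat{R} \cap E_\leq = \emptyset$ and hence $\delta(C^*) \cap E_\leq = K$. The complementary configuration is the main obstacle: I expect to handle it by invoking a slightly different guess in the outer loop, chosen so that the resulting $E_\leq \cap \delta(C^*)$ is aligned more directly with the edges not interdicted by $R^*$ than with $K$, and verifying directly that the corresponding $G_>'$ still admits a budget-feasible cut of bounded residual capacity.

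For the running time, the outer loop contributes $O(m^2)$ iterations; each performs one $O(n^3 \sqrt{m})$ Gomory-Hu tree computation (as cited in the excerpt) followed by $O(n)$ inner iterations over~$f$, each doing a contraction and an $O(n^2 \sqrt{m})$ minimum-cost $s$-$t$-cut computation, for the claimed total of $O(m^{5/2} n^3)$.
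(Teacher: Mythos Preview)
Your strategy is the paper's: compare against the structured $2$-approximation obtained by running the greedy of Fact~\ref{fact: knapsack 2-approximation} on an optimal cut $C^*$, identify a guess $(j,e)$ that aligns $E_\leq\cap\delta(C^*)$ with the kept set $K$, and then bound the residual through Lemma~\ref{lem: cut cover} and the Gomory--Hu contraction step. Your running-time accounting also matches the paper's.

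The genuine gap is the one you yourself flag as the ``main obstacle'' but do not resolve. When $f^*_{KC}$ carries the largest efficiency index in $K$, taking $j$ equal to that index allows cut edges with capacity below $u(f^*_{KC})$ and efficiency index strictly between the greedy's last item and $f^*_{KC}$ to fall into $E_\leq\setminus K$; then $u(\delta_\leq(C^*))$ can exceed $u(K)$, and your chain of inequalities breaks at the step $u(\delta_\leq(C^*))\le 2\,\OPT$. (The paper's own proof simply asserts $R'=\delta(C^*)\cap E_>$, i.e.\ $K=\delta(C^*)\cap E_\leq$, at this juncture and moves on.) Saying you ``expect'' to fix this with a slightly different guess aligned with $R^*$ is the right instinct but is not a proof. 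One concrete completion along those lines: drop the intermediate $2$-approximation and work directly with the optimal pair $(C^*,R^*)$. Take $e$ to be a maximum-capacity edge of $\delta(C^*)\setminus R^*$ and take $j$ to be the \emph{smallest} index with $c(\delta(C^*)\cap E_>)\le B$; such a $j$ exists since at $j=m$ one has $\delta(C^*)\setminus R^*\subseteq E_\leq$. For this minimal $j$, the set $\delta(C^*)\cap E_\leq$ is exactly the integral greedy Knapsack Cover among cut edges of capacity at most $u(e)$, so its capacity is bounded by the LP optimum plus one overshoot item, i.e.\ at most $\OPT+u(e)\le 2\,\OPT$. From there your contraction and Lemma~\ref{lem: cut cover} steps go through unchanged and yield the $2(n-1)$ factor.
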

\begin{proof}
Consider the $2$-approximation algorithm that optimally selects a cut $C^*$ and determines a set of edges to remain after interdiction with the Knapsack $2$-approximation algorithm in Fact~\ref{fact: knapsack 2-approximation}.
Let $R'\subseteq E$ be the interdiction set returned by this algorithm.
We have $\OPT'\coloneqq \maxstflow(G-R')\leq 2\OPT$ by Fact~\ref{fact: knapsack 2-approximation}.
Let $j^*$ be the index of the highest efficiency edge that remains in the cut $\delta(C^*)$, let $e^*\in \delta(C^*)\setminus R$ be the highest capacity edge that remains in the cut $\delta(C^*)$,
and let $f^*$ be the highest weight edge of the
Gomory-Hu tree $T$, constructed in the algorithm for $j=j^*$ and $e=e^*$,
that crosses $C^*$.
It is sufficient to show that $c(R_{j^*,e^*,f^*})\leq B$ and 
\[\maxstflow(G-R_{j^*,f^*,e^*})\leq (n-1)\OPT',\]
since the solution returned by the algorithm will be budget feasible and have objective value no greater than this one.

Consider the iteration when $j=j^*$, $e=e^*$, and $f=f^*$.
By the definitions of $j^*$, $e^*$, and $R'$ we have $R' = \delta(C^*)\cap E_>$.
Since $R'$ is budget feasible, this means $C^*$ is a budget feasible \st-cut in $G_>=(V,E_>)$ and is an $\OPT'$-capacity \st-cut in $G_\leq=(V,E_\leq)$.
In step~\ref{algo: step: contract}, no pair of vertices on opposing sides of $C^*$ are contracted because $f^*$ is a highest weight edge of $T$ crossing $C^*$.
Because of this, $\delta(C^*)\cap E_>$ remains as the edge set of a (budget feasible) cut in $G_>'$ which implies that $R_{j^*,e^*,f^*}$, as the edge set of a minimum cost \st-cut, is budget feasible.

By Lemma~\ref{lem: cut cover} we have $u(\delta(C)\cap E_\leq)\leq (n-1)\kappa(f^*)$.
Finally, $\kappa(f^*)$ is the capacity of a minimum cut in $G_\leq$ that separates the endpoints of $f^*$.
So our observation that $G_\leq$ contains a \st-cut $C^*$ of
capacity $\OPT'$ that separates the endpoints of $f^*$
implies $\kappa(f^*)\leq \OPT'$.
Combining the inequalities we have, as desired, $\maxstflow(G-R_{j^*,e^*,f^*})\leq 2(n-1)\OPT$.

The algorithm requires constructing no more than $m^2$ Gomory-Hu trees and $m^2$ minimum \st-cut computations, so the running time is dominated by the Gomory-Hu trees, which take $O(\sqrt{m}n^3)$ time each.
Thus the total time taken by Algorithm~\ref{algo: approximation} is $O(m^{5/2}n^3)$.
\end{proof}

As we mentioned earlier, the Knapsack Cover 2-approximation algorithm of Fact~\ref{fact: knapsack 2-approximation} can be converted into a $(1+\frac{1}{k})$-approximation algorithm by guessing $k$ edges rather than just one.
If we add the extra guessing to Algorithm~\ref{algo: approximation} then we can improve the factor 2 in its approximation ratio, arriving at the following corollary to Theorem~\ref{thm: approximation}.
\begin{corollary}
For any integer $k>0$, there exists a $(1+\frac{1}{k})(n-1)$-approximation algorithm for \nfi{} that runs in time $O(m^{\frac{3}{2}+k}n^3)$.
\end{corollary}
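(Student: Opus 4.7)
The plan is to modify Algorithm~\ref{algo: approximation} by strengthening the implicit Knapsack Cover subroutine against which the analysis compares. Fact~\ref{fact: knapsack 2-approximation} generalizes to a $(1+\frac{1}{k})$-approximation in $O(m^{1+k})$ time by guessing the $k$ highest value items in an optimal cover instead of just one and then greedily completing; the greedy error is bounded by the smallest value among the $k$ guessed items, hence by $\OPT/k$.

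To mirror this in the interdiction algorithm, I would replace the single guess $e \in E$ by a guessed $k$-set $Q \subseteq E$ and redefine
\[
E_\leq = Q \cup \left\{e_i \mid i \leq j \text{ and } u(e_i) \leq \min_{e \in Q} u(e)\right\}, \qquad E_> = E \setminus E_\leq.
\]
All remaining steps are unchanged: construct a Gomory-Hu tree $(T,\kappa)$ for $(V, E_\leq)$, iterate over $f \in E(T)$, contract in $(V, E_>)$ every pair $\{w,v\}$ with $\kappa(wv) > \kappa(f)$, compute a minimum cost \st-cut $C$ in the contracted graph, and record $R_{Q,j,f} = \delta(C) \cap E_>$ whenever it is budget feasible. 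The algorithm returns the stored set that minimizes $\maxstflow(G - R_{Q,j,f})$.

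The analysis would follow the proof of Theorem~\ref{thm: approximation} with only the benchmark algorithm changed. Let $C^*$ be an optimal cut to attack, let $R'$ be the interdiction set obtained by the $(1+\frac{1}{k})$-Knapsack-Cover applied to $\delta(C^*)$, and set $\OPT' \coloneqq \maxstflow(G - R') \leq (1+\frac{1}{k})\OPT$. Take $Q^*$ to be the $k$ highest capacity edges of $\delta(C^*) \setminus R'$, let $j^*$ index a highest efficiency edge of $\delta(C^*) \setminus R'$, and let $f^*$ be a maximum weight tree edge crossing $C^*$. By the structure of the $(1+\frac{1}{k})$-Knapsack Cover, every edge of $\delta(C^*) \setminus R'$ outside $Q^*$ has capacity at most $\min_{e \in Q^*} u(e)$, so $\delta(C^*) \setminus R' \subseteq E_\leq$. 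Thus $C^*$ is a budget-feasible \st-cut in $(V, E_>)$ of capacity $\OPT'$ in $(V, E_\leq)$, no pair separated by $C^*$ is contracted, and the argument of Theorem~\ref{thm: approximation} transfers verbatim to yield $R_{Q^*,j^*,f^*}$ budget feasible with
\[
\maxstflow(G - R_{Q^*,j^*,f^*}) \leq (n-1)\kappa(f^*) \leq (n-1)\OPT' \leq \Big(1+\tfrac{1}{k}\Big)(n-1)\OPT,
\]
where the first inequality uses Lemma~\ref{lem: cut cover}.

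The runtime is dominated by $\binom{m}{k} \cdot m = O(m^{k+1})$ outer iterations, each requiring one Gomory-Hu construction in $O(\sqrt{m}\, n^3)$ time, for the claimed $O(m^{k+3/2} n^3)$ bound. The only subtle point, and the main thing I would double-check, is the necessity of including $Q^*$ explicitly in $E_\leq$ alongside the capacity threshold: the edges of $Q^*$ are precisely the $k$ remaining edges whose capacities tie or exceed the threshold $\min_{e \in Q^*} u(e)$, so without the explicit union they would be excluded and the reduction to the proof of Theorem~\ref{thm: approximation} would break.
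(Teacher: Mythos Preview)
Your proposal is correct and follows exactly the approach the paper has in mind; the paper itself offers only a one-line remark (``add the extra guessing to Algorithm~\ref{algo: approximation}''), so your write-up is in fact more detailed than the original. The only cosmetic point is the tie-breaking in the definition of $E_\leq$: using strict inequality $u(e_i) < \min_{e\in Q} u(e)$ (together with the explicit union with $Q$) cleanly guarantees $\delta(C^*)\cap E_\leq = \delta(C^*)\setminus R'$, whereas with $\leq$ an edge outside $Q^*$ that ties $\min_{e\in Q^*} u(e)$ and has low efficiency could land in both $E_\leq$ and $R'$; this is the same glossed-over tie issue already present in Algorithm~\ref{algo: approximation} and is easily handled by consistent tie-breaking.
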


\section{Variants of \nfi{}}\label{sec: variants}

A natural question related to \nfi{} is to ask
whether there is an approximation algorithm that approximates the change in the maximum flow, rather than the residual flow in the network.
We call \nfi{} with this alternative objective the reduction \nfi{} problem.
An $\alpha$-approximation for \nfi{} is generally not an approximation algorithm for reduction \nfi{}, since an $\alpha$-approximation is not enough to determine whether the maximum reduction in flow is zero or positive.
So, we cannot count on Algorithm~\ref{algo: approximation} to approximate reduced \nfi{} to within any factor, and it turns out that achieving any multiplicative approximation ratio is NP-hard.

It was shown in~\cite{zenklusen2010matching} that it is NP-hard to determine whether the optimal objective value is 0 or positive for the reduction version of the Bipartite Matching Interdiction problem, where edges are removed to reduce the cardinality of a maximum matching.
We can easily reduce Bipartite Matching Interdiction to \nfi{} with the standard reduction from bipartite maximum matchings to flows.
Thus we have the following.
\begin{proposition}\label{prop: reduction nfi}
It is NP hard to determine whether the optimal objective for reduction~\nfi{} is 0 or positive.
\end{proposition}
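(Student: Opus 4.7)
The plan is to exhibit a polynomial-time, objective-preserving reduction from reduction Bipartite Matching Interdiction to reduction \nfi{}, so that NP-hardness of distinguishing zero from positive optimal objective transfers directly. By the cited result of \cite{zenklusen2010matching}, it is NP-hard to decide, given a bipartite graph with edge costs and a budget, whether there is a budget-feasible set of edges whose removal strictly decreases the size of a maximum matching.

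I would construct the \nfi{} instance using the textbook reduction from bipartite matching to max-flow. Given a bipartite matching interdiction instance on $G=(L\cup R, E)$ with costs $c:E\to\N$ and budget $B$, form a flow network $G'$ by adding a source $s$ joined to every vertex in $L$, a sink $t$ joined to every vertex in $R$, orienting everything from $s$ toward $t$ (or, for the undirected version of \nfi{}, keeping the graph undirected is fine since no augmenting $s$-$t$ structure other than the intended one exists), and setting all capacities to $1$. The cost of each original edge $e\in E$ is kept as $c(e)$, while every edge incident to $s$ or $t$ is given cost $\infty$, so that only the original edges can be interdicted; as noted in the preliminaries, $\infty$ is just shorthand for ``can never be removed.'' The budget is the same $B$. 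A standard argument shows that for any $R_0\subseteq E$, the maximum $s$-$t$-flow value in $G'-R_0$ equals the maximum matching size in $G-R_0$, so the maximum achievable flow \emph{reduction} under budget $B$ equals the maximum achievable matching reduction under budget $B$.

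In particular, the optimal objective of reduction \nfi{} on $G'$ is $0$ if and only if the optimal objective of reduction Bipartite Matching Interdiction on $G$ is $0$. Since the latter decision is NP-hard by \cite{zenklusen2010matching}, so is the former, which is the claim. The reduction itself is clearly polynomial: $G'$ has $|L|+|R|+2$ vertices and $|E|+|L|+|R|$ edges.

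The only mild subtlety, and the step I would be most careful about, is the treatment of the edges incident to $s$ and $t$: they must be protected from interdiction so that an interdicting adversary can only remove ``real'' matching edges, otherwise the adversary could artificially reduce the flow by cutting off a matched vertex from $s$ or $t$ without reducing the matching. Assigning them cost $\infty$ (equivalently, $B+1$) settles this cleanly and keeps the equivalence between feasible interdiction sets in the two instances exact. Everything else is bookkeeping around the standard matching-to-flow correspondence.
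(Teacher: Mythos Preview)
Your proof is correct and follows exactly the approach sketched in the paper: reduce from reduction Bipartite Matching Interdiction via the standard matching-to-flow construction, protecting the $s$- and $t$-incident edges with infinite cost so that feasible interdiction sets coincide. The paper's own proof is essentially the one-line version of what you wrote; your added detail about why the $\infty$-costs are needed is precisely the right point to highlight.
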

The main consequence of Proposition~\ref{prop: reduction nfi} is that there is no multiplicative approximation algorithm for reduction~\nfi{}.

The Budgeted Minimum \st-Cut problem, \bmc{}, is also essentially the same as \nfi{}.
In the following $\alpha$ may depend on $n$ but we drop the dependence to simplify the notation.
\begin{theorem}\label{thm: nfi=bmc}
If there exists a time $T(n,m)$ $\alpha$-approximation algorithm for \nfi{} then there is a time $O(T(n,2m))$ $\alpha$-approximation for \bmc{}.
If there exists a time $T(n,m)$ $\alpha$-approximation algorithm for \bmc{} then, for any integer $1\leq k$, there exists a time $O(m^{1+k}T(n,m))$ $\alpha'$-approximation for \nfi{}, where $\alpha'=(1+\frac{1}{k})\alpha$.
\end{theorem}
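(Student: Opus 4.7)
My plan is to prove each direction by an approximation-preserving reduction, using Phillips' observation from Section~\ref{sec: approximation}---an optimal \nfi{} interdiction $R^*$ sits inside some cut $\delta(C^*)$ and $\delta(C^*)\setminus R^*$ realizes the residual min-cut, so \nfi{}~\OPT\ $=u(\delta(C^*)\setminus R^*)$---as the bridge to \bmc{}. Since \bmc{} removes the entire cut while \nfi{} may selectively keep some cut-edges, each reduction needs to simulate this one-sided freedom.

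For the first direction, given a \bmc{} instance $(G,u,c,B)$ I build an auxiliary \nfi{} graph $G'$ on the same vertex set by replacing each edge $e=uv$ of $G$ with two parallel edges between $u$ and $v$: an ``interdict'' copy $e^c$ with cost $c(e)$ and capacity $\infty$, and a ``keep'' copy $e^u$ with cost $\infty$ and capacity $u(e)$. Any interdiction set $R$ in $G'$ with $c(R)\leq B$ and finite residual max-flow must contain $\{e^c : e\in\delta_G(C)\}$ for some \st{}-cut $C$ (otherwise an infinite-capacity $e^c$ survives on every \st{}-cut), and removing exactly this set costs $c(\delta_G(C))$ while leaving as the only finite-capacity \st{}-cut in $G'-R$ the $e^u$ copies for $e\in\delta_G(C)$, of total capacity $u(\delta_G(C))$. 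Hence \nfi{}~\OPT\ on $G'$ equals \bmc{}~\OPT\ on $G$, and reading off the cut $C$ from a min-cut of $G'-R$ turns any $\alpha$-approximate \nfi{} solution into an $\alpha$-approximate \bmc{} cut. Since $|V(G')|=n$ and $|E(G')|=2m$, the running time is $O(T(n,2m))$.

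For the reverse direction, I plan to enumerate all $k$-subsets $S\subseteq E$ (there are $O(m^k)$ of them), treating $S$ as a guess for the $k$ highest-capacity edges of $\delta(C^*)\setminus R^*$ in an optimal \nfi{} solution; after the correct guess, every other kept cut-edge has capacity at most $u_0 := \min_{e\in S}u(e)$. For each $S$ I build a modified \bmc{} instance that forces $S$ into the cut at zero cost---by setting the cost of each $e\in S$ to $0$ and attaching auxiliary infinite-capacity, zero-cost edges from the guessed $s$-side endpoint of each $e\in S$ to $s$ and from the guessed $t$-side endpoint to $t$---and then invoke the \bmc{} $\alpha$-approximation on this modified instance. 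From the returned cut $C$ I extract the \nfi{} interdiction set $R:=\delta_G(C)\setminus S$ and return the best such $R$ over all guesses.

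The main obstacle and source of the $(1+\tfrac{1}{k})$ factor is that the modified \bmc{} cannot exactly mirror \nfi{}: the Knapsack-Cover subproblem of deciding which cut-edges to interdict is only approximately captured by a single \bmc{} call. I plan to invoke the standard Knapsack-Cover $(1+\tfrac{1}{k})$-approximation analysis: after the correct guess, $u(\delta(C^*)\setminus R^*)\geq u(S)\geq k\,u_0$, so the top-$k$ guessed edges dominate the \nfi{} optimum, while the modified \bmc{} cut can overshoot \nfi{}~\OPT\ by at most an additional $u_0$ term accounting for greedy coverage of the remaining cost constraint; this gives a $(1+\tfrac{1}{k})$ multiplicative blow-up that composes with the \bmc{} $\alpha$-approximation to yield $(1+\tfrac{1}{k})\alpha$. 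With $O(m^k)$ guesses each calling \bmc{} once on an $n$-vertex, $m$-edge graph plus $O(m)$ bookkeeping, the total running time is $O(m^{1+k}T(n,m))$.
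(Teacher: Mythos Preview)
Your first direction is correct and essentially identical to the paper's: duplicate each edge into a cost-only copy and a capacity-only copy, and read off the cut from the \nfi{} solution.

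The second direction has a genuine gap. You guess only the $k$ highest-capacity surviving edges $S$, set their cost to $0$, force them across the cut, and then interdict $R=\delta_G(C)\setminus S$. For this $R$ to be budget-feasible you need $c(\delta_G(C)\setminus S)\leq B$, which is exactly the feasibility condition for your modified \bmc{}. But even at the correct guess $S=S^*$, the cut $C^*$ need not be feasible: $\delta(C^*)\setminus S^* = R^*\cup\bigl((\delta(C^*)\setminus R^*)\setminus S^*\bigr)$, and the second part can carry arbitrary cost. Concretely, take three parallel $s$-$t$ edges $e_1,e_2$ with $u=1,c=1$ and $e_3$ with $u=100,c=1$, budget $B=1$, and $k=1$. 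Then $\OPT=2$ (interdict $e_3$), $K^*=\{e_1,e_2\}$, $S^*=\{e_1\}$ say, yet the unique cut has modified cost $c(e_2)+c(e_3)=2>B$. Every guess $S$ leaves the modified \bmc{} infeasible, so your algorithm returns nothing. The hand-wave to ``the standard Knapsack-Cover $(1+\tfrac1k)$ analysis'' does not rescue this, because that analysis relies on an efficiency threshold, not just a top-$k$ value guess.

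What the paper does, and what you are missing, is an additional guess: the least-efficient interdicted edge $f^*\in R^*$ (this is the extra factor of $m$ that gives $m^{1+k}$ guesses, which your accounting also mysteriously arrives at). With both $S^*$ and $f^*$ in hand, the paper modifies \emph{every} edge: edges in $S^*$ or with efficiency below that of $f^*$ get cost $0$ (they are ``kept'' for free), while all other edges get capacity $0$ (they are ``interdicted'' with no capacity penalty). Now $C^*$ is feasible by construction --- its modified cost is exactly $c(R^*)\leq B$ --- and the $(1+\tfrac1k)$ factor comes from the Knapsack-Cover comparison on $\delta(C^*)$. Your construction needs this second guess and the corresponding capacity-zeroing step.
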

\begin{proof}
Let $G=(V,E),u,c,s,t,B$ be an instance of \bmc{}.
Let $E_1$ and $E_2$ be disjoint copies of $E$ and construct a \nfi{} instance on the graph $G'=(V,E_1\cup E_2)$ as follows.
Given the two copies $e_1\in E_1$ an $e_2\in E_2$ of $e\in E$ assign \nfi{} costs~$c'$ and capacities $u'$ as follows: $c'(e_1)=c(e)$, $u'(e_1)=\infty$, $c'(e_2)=\infty$, and $u'(e_2)=u(e)$.
Now, for any \st-cut $C\subseteq V$, $c(\delta_G(C))\leq B$ if and only if $R=\delta_{G'}(C)\cap E_1$ satisfies $c'(R)\leq B$ and the minimum \st-cut in $G-R$ has capacity $u'(\delta_{G'}(C)\setminus R) = u'(\delta_{G'}(C)\cap E_2) = u(\delta_G(C))$.
This establishes an objective preserving 1-1 correspondence where each budget feasible cut in $G$ is matched with a budget feasible \nfi{} solution that has as its edge removal set all finite cost edges of $G'$ crossing the cut.
The first claim follows immediately from this 1-1 correspondence.

Now suppose $G=(V,E),u,c,s,t,B$ is an instance of \nfi{} and
order the edges $E=\{e_1,\dots, e_m\}$ according to
increasing efficiency~$\rho(e)=u(e)/c(e)$.
To solve this instance with a \bmc{} algorithm, we first guess $k$ edges that remain in the attacked cut after an optimal interdiction and one that is interdicted.
Let $R^*$ be an optimal interdiction set that attacks a \st-cut $C^*\subseteq V$.
We will guess the set $S^*\subseteq \delta(C^*)\setminus R^*$ of $\min\{k,|\delta(C^*)\setminus R^*|\}$ edges crossing $C^*$ with the highest capacities that remain after interdiction with $R^*$ and the edge $f^*\in R^*$ with least label (i.e.\ least efficiency) of any edge in $R^*$.
There are no more than $m\sum_{i=0}^k\binom{m-1}{i}\leq m^{k+1}$ guesses~$S, f$ to try.

For each guess~$S,f$ and each edge $e\in E$, we adjust the cost and capacity of $e$ as follows.
If $e\in S$ or if it has a label lower than that of $f$, set the cost of $e$ to $0$ and leave its capacity unchanged.
Otherwise, that is if $e\notin S$ has its label at least as large as the label of $f$, set its capacity to $0$ and leave its cost unchanged.
When $S=S^*$ and $f=f^*$ the optimal \bmc{} solution corresponds to a $(1+\frac{1}{k})$-approximate \nfi{} solution, which follows by an extension of Fact~\ref{fact: knapsack 2-approximation} upon comparing against the appropriate Knapsack Cover problem on $\delta(C^*)$.
Since cost $b\leq B$ cuts with capacity $\nu$ for the \bmc{} instance correspond exactly to cost $b$ interdiction solutions with objective value at most $\nu$, an $\alpha$-approximate budgeted minimum cut corresponds to a $\alpha'$-approximate \nfi{} solution.

Solving \nfi{} in this manner requires running the \bmc{}-algorithm once for each of the $O(m^{1+k})$ guesses and additional $O(n+m)$ overhead for creating the auxiliary graphs, capacities, and costs.
Thus the total time is $O(m^{1+k}T(n,m))$.
\end{proof}

The proof of Theorem~\ref{thm: nfi=bmc} shows that \nfi{} manifests
itself as the combination of a \bmc{} and a Knapsack problem.
Notice that the Knapsack problem is indeed a very special
case of \nfi{}.
For this consider an \nfi{} problem on a graph that has only two
vertices $s$ and $t$ with many edges between them.
There is only one \st-cut, so to solve \bmc{} one only needs
reduce the value of this unique \st-cut as much as possible through
interdiction.
This is clearly a knapsack problem.

Another variant we want to mention briefly is \emph{min-cost~\nfi{}}.  
In this variant we impose that $\maxstflow(G-R)\leq B$ and the goal is to minimize $c(R)$.
In fact, \nfi{} and min-cost~\nfi{} are equivalent, as has been shown
in~\cite{zenklusen2010network}, and this carries over to
approximation factors.

Finally, we mention two other variants of \nfi{}.
One allows for directed edges and the other allows for interdiction by vertex removals, instead of or in addition to edge removals.
We call these directed \nfi{} and node-wise \nfi{}, respectively.
Clearly, each of these is at least as hard as \nfi{}, by replacing each edge with anti-parallel arcs in the directed case and by subdividing each edge with a vertex in the node-wise case.
It is also possible to modify the reduction used in Theorem~\ref{thm: hardness} slightly to adapt it to each of these settings.
Now, combining the material in this section we have the following corollary to Theorem~\ref{thm: hardness}.
\begin{corollary}\label{cor: variants}
If there is an $\alpha(n)$-approximation algorithm for the directed, node-wise, or min-cost variants of \nfi{} or \bmc{}, then there is a $2\alpha(n^2)^2$-approximation algorithm for \dks{}.
\end{corollary}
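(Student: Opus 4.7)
The plan is to handle each variant either by adapting the \dks{}-to-\nfi{} reduction of Theorem~\ref{thm: hardness} directly, or by composing that reduction with an approximation-preserving bridge to \nfi{}. In every case the final step reuses Lemma~\ref{lem: subsampling} exactly as in the proof of Theorem~\ref{thm: hardness}, so the only real work is to build the right auxiliary instance and re-verify an analogue of Lemma~\ref{lem: cut solutions}.

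For directed \nfi{} I would orient the auxiliary graph $G$ from Theorem~\ref{thm: hardness} along $s\to V\to E\to t$, leaving costs, capacities, and the budget unchanged. The three successive transformations in the proof of Lemma~\ref{lem: cut solutions} should go through with ``connected component of $V'$'' replaced by ``strongly connected component of the residual directed graph'', so optimal interdiction sets can again be taken to be cut solutions. For node-wise \nfi{} I would subdivide each finite-cost edge (those in $E'\cup\delta(t)$) with a new degree-two vertex carrying unit cost, and give cost $\infty$ to all original vertices and to the remaining edges. This adds at most $3|E|$ vertices, so the total is still $O(|V|^2)$; any $\alpha$-approximation on this instance yields, via the same cut-solution plus subsampling argument, the bound $2\alpha(n^2)^2$ after absorbing the constant blow-up into the leading factor using the sharpening of the constant $2$ noted right after Theorem~\ref{thm: hardness}.

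For min-cost \nfi{} I would quote the approximation-factor-preserving equivalence with \nfi{} recorded in~\cite{zenklusen2010network} and then apply Theorem~\ref{thm: hardness}. For \bmc{} I would use Theorem~\ref{thm: nfi=bmc}, which turns an $\alpha$-approximation for \bmc{} into a $(1+\tfrac{1}{k})\alpha$-approximation for \nfi{} for any integer $k\geq 1$; feeding this into Theorem~\ref{thm: hardness} yields a $2((1+\tfrac{1}{k})\alpha(n^2))^2$-approximation for \dks{}, which collapses to $2\alpha(n^2)^2$ by taking $k$ large and once more absorbing the residual multiplicative slack into the leading constant. The directed, node-wise, and min-cost \bmc{} variants are then handled by composing the \bmc{}-to-\nfi{} reduction of Theorem~\ref{thm: nfi=bmc} with the corresponding \nfi{}-variant reduction, noting that each composition preserves the relevant structural feature (edge orientations or vertex costs).

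I expect the main obstacle to be verifying the analogue of Lemma~\ref{lem: cut solutions} in the directed and node-wise settings, in particular its third transformation (the swap $R-ve+we-et$ for a flow/no-flow component pair): in the directed case one has to check that the swap respects orientations, and in the node-wise case one has to recast all three transformations as operations on sets of interdicted subdividing vertices rather than edges. Both adaptations look mechanical; once they are in place, the remaining bookkeeping on vertex counts and approximation constants is routine.
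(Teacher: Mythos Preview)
Your proposal is correct and follows the paper's high-level strategy: combine the bridges established in Section~\ref{sec: variants} with Theorem~\ref{thm: hardness}, then absorb the stray $(1+\tfrac{1}{k})$ factor using the slack in inequality~\eqref{eq: hardness factor 2}. The paper's one-line justification of Corollary~\ref{cor: variants} does exactly this, noting explicitly that the absorption step works once the \dks{} parameter satisfies $k\geq 3$.

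The one place you take a longer road is for the directed and node-wise variants. You propose to orient the auxiliary graph (resp.\ subdivide its edges) and then re-verify Lemma~\ref{lem: cut solutions} in each setting, flagging the third transformation as the tricky step. The paper sidesteps this entirely: it just invokes the generic reductions from undirected edge-\nfi{} to the variants---replace each edge by anti-parallel arcs for the directed case, subdivide each edge with a vertex carrying the edge's cost for the node-wise case---and then applies Theorem~\ref{thm: hardness} as a black box. Both reductions preserve the number of vertices up to a constant factor (anti-parallel arcs leave $n$ unchanged; subdivision adds $m$ vertices, and the auxiliary graph already has $m=O(n)$), so the $\alpha(n^2)$ bound survives. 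This is simpler and avoids the issue you would hit in the directed case: with the $s\to V\to E\to t$ orientation the graph is a DAG, so ``strongly connected component of the residual graph'' is not the right substitute for the connected-component argument in Lemma~\ref{lem: cut solutions}; you would need a different notion (reachability classes from $s$, say). Your approach can be made to work, but the paper's route removes the obstacle you identified rather than overcoming it.
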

In Corollary~\ref{cor: variants}, we have absorbed the extra $(1+\epsilon)$ that appears in the reduction from \nfi{} to \bmc{} into the 2.  
This is possible with the assumption that $k\geq 3$ upon revisiting the bound in \eqref{eq: hardness factor 2}.

Finally, we remark that Joret and Vetta~\cite{joret2012reducing} have proved that reducing the rank of a transversal matroid is \dks{}-hard to approximate.
Recall that a transversal matroid is defined by a bipartite graph $(X\cup Y,E)$, its ground set is $X$ and a set $S\subseteq X$ is independent if the graph has a matching that covers $S$. 
The rank reduction problem is to find a minimum cardinality set $R\subseteq X$ such that the maximum matching with no endpoints in $R$ is below a given threshold.
Their results do not imply hardness for \nfi{} (or more appropriately, min-cost \nfi{}), but there are hardness implications for some of the variants.
By taking the standard formulation of Maximum Bipartite Matching as a Network Flow problem, their Theorem~3.4 implies that any $O(n^{\epsilon})$-approximation algorithm for node-wise min-cost \nfi{} or for directed min-cost \nfi{} can be used for an $O(n^{4\epsilon})$-approximation to the densest $k$-subgraph.
However, this a weaker statement than Corollary~\ref{cor: variants} and their approximation ratio does not improve if \dks{} has hard sparse instances, whereas ours does as we discussed at the end of Section~\ref{sec: hardness}.
\bibliographystyle{plain}
\bibliography{lit}

\end{document}